\documentclass[a4paper,uplatex]{article}
\usepackage{fullpage}
\usepackage{amssymb,amsmath}
\usepackage{graphicx}
\usepackage{hyperref}

\usepackage{amsfonts}
\usepackage{color}
\usepackage{url}
\usepackage{colortbl,array,xcolor}


\newcommand{\qed}{\rule[1pt]{4pt}{6pt}}

\newtheorem{theorem}{Theorem}
\newtheorem{lemma}[theorem]{Lemma}

\newtheorem{corollary}[theorem]{Corollary}

\newenvironment{proof}{\par\noindent{{\sf Proof}}}{\hfill\qed\par}

\newcommand{\drawS}{\begin{tabular}[c]{c}\includegraphics[width=6mm,bb=0 0 55 54]{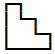}\end{tabular}}
\newcommand{\drawJ}{\begin{tabular}[c]{c}\includegraphics[width=6mm,bb=0 0 55 54]{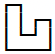}\end{tabular}}
\newcommand{\drawF}{\begin{tabular}[c]{c}\includegraphics[width=8mm,bb=0 0 66 38]{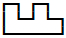}\end{tabular}}

\newcommand{\drawJRec}{\begin{tabular}[c]{c}\includegraphics[width=8mm,bb=0 0 66 54]{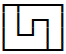}\end{tabular}}

\newcommand{\drawJX}{\begin{tabular}[c]{c}\includegraphics[width=8mm,bb=0 0 66 68]{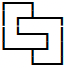}\end{tabular}}

\newcommand{\drawFt}{\begin{tabular}[c]{c}\includegraphics[width=14mm,bb=0 0 106 82]{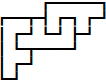}\end{tabular}}

\newcommand{\drawFs}{\begin{tabular}[c]{c}\includegraphics[width=12mm,bb=0 0 95 82]{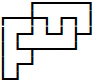}\end{tabular}}

\newcommand{\drawSz}{\begin{tabular}[c]{c}\includegraphics[width=8mm,bb=0 0 68 55]{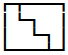}\end{tabular}}

\newcommand{\drawSo}{\begin{tabular}[c]{c}\includegraphics[width=10mm,bb=0 0 80 68]{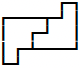}\end{tabular}}

\newcommand{\drawSt}{\begin{tabular}[c]{c}\includegraphics[width=12mm,bb=0 0 96 98]{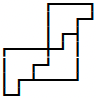}\end{tabular}}

\newcommand\YY{$(\checkmark)$}
\newcommand\NN{$(\times)$}

\begin{document}

\title{Solving Rep-tile by Computers: Performance of Solvers and Analyses of Solutions}

\author{
  \begin{tabular}{c@{\qquad}c@{\qquad}c@{\qquad}c}
  Mutsunori Banbara\thanks{Nagoya University, Japan. {\tt \{banbara,k-hasimt,sakai\}@i.nagoya-u.ac.jp}}
  &
  Kenji Hashimoto\footnotemark[1]
  &
  Takashi Horiyama\thanks{Hokkaido University, Japan. {\tt horiyama@ist.hokudai.ac.jp}}
  &
  Shin-ichi Minato\thanks{Kyoto University, Japan. {\tt minato@i.kyoto-u.ac.jp}}
  \\[\medskipamount]
  Kakeru Nakamura\thanks{Japan Advanced Institute of Science and Technology, Japan. {\tt \{s2010142,uehara\}@jaist.ac.jp}}
  &
  Masaaki Nishino\thanks{NTT Corporation, Japan. {\tt \{masaaki.nishino.uh,norihito.yasuda.hn\}@hco.ntt.co.jp}}
  &
  Masahiko Sakai\footnotemark[1]
  &
  Ryuhei Uehara\footnotemark[4]
  \\[\medskipamount]
  &
  Yushi Uno\thanks{Osaka Prefecture University, Japan. {\tt uno@cs.osakafu-u.ac.jp}}
  &
  Norihito Yasuda\footnotemark[5]
  \end{tabular}
}

\index{Banbara, Mutsunori}
\index{Hashimoto, Kenji}
\index{Horiyama, Takashi}
\index{Minato, Shin-ichi}
\index{Nakamura, Kakeru}
\index{Nishino, Masaaki}
\index{Sakai, Masahiko}
\index{Uehara, Ryuhei}
\index{Uno, Yushi}
\index{Yasuda, Norihito}

\maketitle

\begin{abstract}
A \emph{rep-tile} is a polygon that can be dissected into smaller copies (of the same size) of the original polygon.
A \emph{polyomino} is a polygon that is formed by joining one or more unit squares edge to edge.
These two notions were first introduced and investigated by Solomon W.~Golomb in the 1950s and popularized by Martin Gardner in the 1960s.
Since then, dozens of studies have been made in communities of recreational mathematics and puzzles.
In this study, we first focus on the specific rep-tiles that have been investigated in these communities.
Since the notion of rep-tiles is so simple that can be formulated mathematically in a natural way,
we can apply a representative puzzle solver, a MIP solver, and SAT-based solvers
for solving the rep-tile problem in common.
In comparing their performance, we can conclude that the puzzle solver is the weakest
while the SAT-based solvers are the strongest in the context of simple puzzle solving.
We then turn to analyses of the specific rep-tiles.
Using some properties of the rep-tile patterns found by a solver,
 we can complete analyses of specific rep-tiles up to certain sizes.
That is, up to certain sizes, we can determine the existence of solutions,
clarify the number of the solutions, or we can enumerate all the solutions for each size.
In the last case, we find new series of solutions for the rep-tiles which have never been found in the communities.

\end{abstract}

\thispagestyle{empty}


\section{Introduction}
\label{sec:intro}

In some games like Tetris, polygons obtained by joining unit squares edge to edge are used as their pieces.
These polygons are called polyominoes, and they have been used in popular puzzles since at least 1907.
Solomon W. Golomb introduced the name polyomino in 1953 and was widely investigated \cite{Golomb}.
It was popularized in the 1960s by the famous column in \emph{Scientific American} written by Martin Gardner \cite{Gardner1}.

\begin{figure}[ht]
\centering
\includegraphics[width=0.3\textwidth]{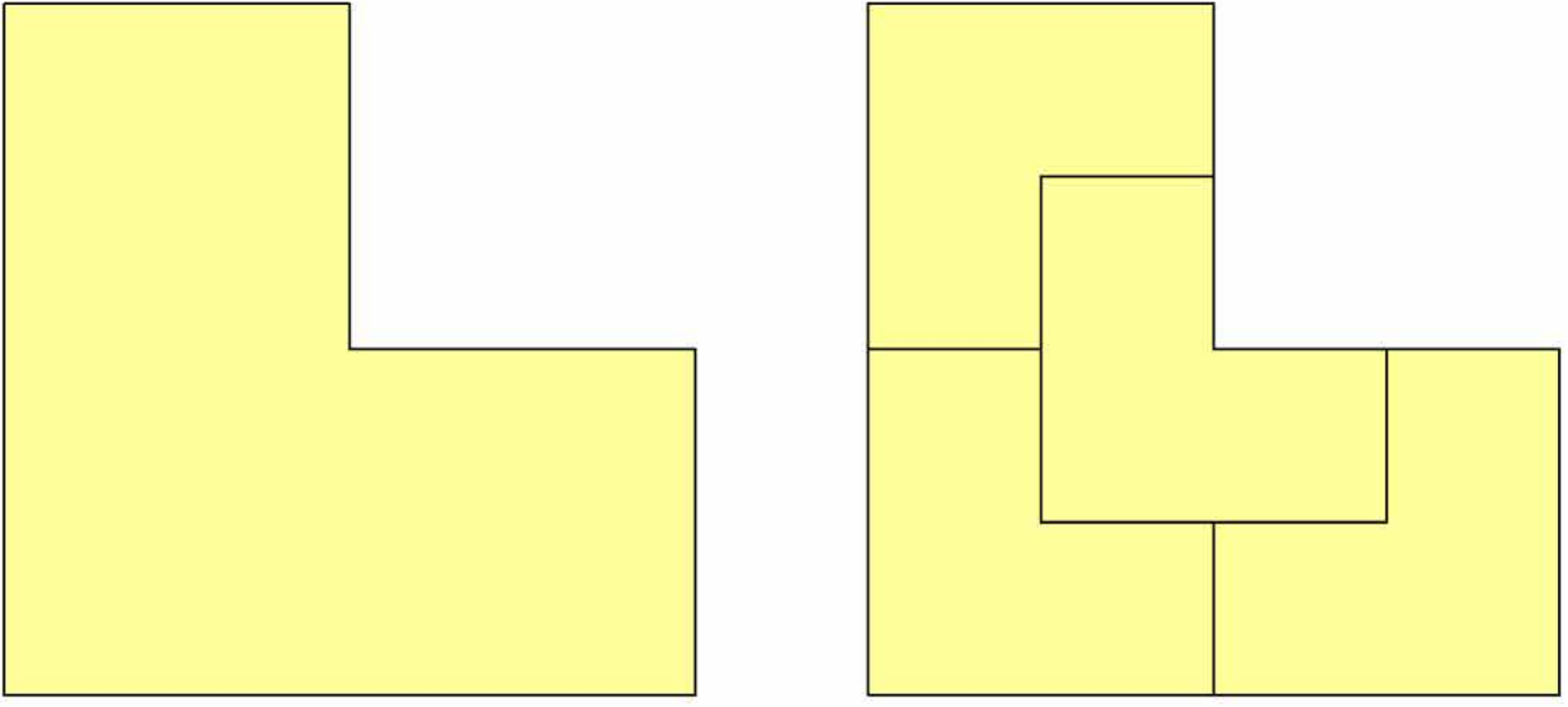}
\caption{An example of a rep-tile of rep-4.}
\label{fig:rep1}
\end{figure}

Golomb is also known as an inventor of the notion of rep-tile.
A polygon $P$ is called rep-tile if it can be dissected into smaller copies of $P$.
Especially, if $P$ can be dissected into $n$ copies, it is said to be rep-$n$.
An example of a rep-tile of rep-4 is given in \figurename~\ref{fig:rep1}.
We can observe that each of 4 copies can be dissected into 4 smaller copies, which give us rep-$16$.
That is, a rep-tile of rep-$n$ is also rep-$n^i$ for any positive integer $i=1,2,\ldots$.
We also extend the rep-tile of rep-$n$ by tiling $n$ copies to make a larger pattern.
That is, we can tile the plane by repeating this process.
It is known that some rep-tile can be used to generate acyclic tiling
(i.e., the tiling pattern cannot be identical by shifting and rotation).
Both cyclic and acyclic tilings have been well investigated since 
they have applications to chemistry, especially, crystallography \cite{Gardner4}.
From the viewpoints of mathematics and art, the notion of rep-tile is popular 
as we can obtain tiling of the plane with the same shapes of different sizes 
by replacing a part of the rep-tiles by their copies recursively.

\begin{figure}[ht]
\centering
\includegraphics[width=0.4\textwidth]{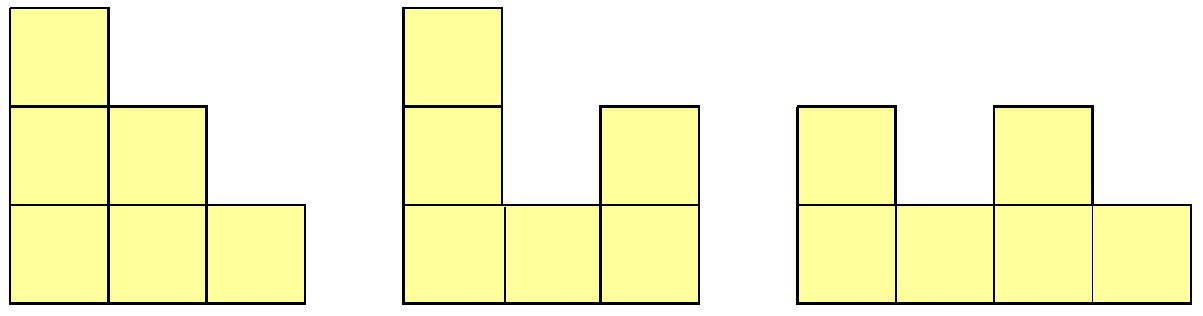}
\caption{The 6-ominoes of \emph{stair-shape}, \emph{J-shape}, and \emph{F-shape}}
\label{fig:rep144}
\end{figure}

Gardner introduced the polyomino rep-tiles in \cite{Gardner4}.
Precisely, he introduced three 6-ominoes (polyominoes formed by 6 unit squares) 
in \figurename~\ref{fig:rep144} as rep-tiles of rep-144.
When the article \cite{Gardner4} was written, 
the minimum number of dissections for these three rep-tiles was conjectured as 144.
Namely, they are the rep-tiles of rep-144, and not rep-$k$ for any $1<k<144$.
However, they have been found out that the left \emph{stair-shape} is a rep-tile of rep-121,
the central \emph{J-shape} is a rep-tile of rep-36, and
the right \emph{F-shape} is a rep-tile of rep-64 \cite{Gardner4,RepPage}.

Polyomino rep-tiles have a long history mainly in the contexts of puzzles and recreational mathematics.
They have been investigated since the 1950s, however, they have relied on discoveries by hand.
In fact, there are many constructive solutions for these puzzles on the web page \cite{RepPage}
However, these puzzles have not yet ``solved'' in the strict sense 
since any nonexistent results for these cases have not yet be given.

In this research, we first experiment on these three polyominoes in \figurename~\ref{fig:rep144}
and check if they are rep-$n$ for each $n$ by the representative approaches by a computer.
Since the notion of a rep-tile is a quite simple puzzle,
we can represent the conditions of a rep-tile in several different natural ways
in the terms of representative problem solvers.
Therefore, we can compare the performance of the different problem solvers using such a simple puzzle as a common problem.
We use the following three different approaches to solving the rep-tiles by a computer.
\begin{description}
\item[Puzzle solver and implementation based on dancing links:]
Nowadays, most puzzle designers use a free puzzle solver.
It is based on a data structure called dancing links proposed by Knuth.
It is said that dancing links is the data structure that allows us to perform backtracking efficiently,
and hence it is suitable to analyze puzzles.
Although we do not know the details of the implementation of the free puzzle solver,
we also independently implemented two algorithms; one uses dancing links,
and the other uses dancing links with ZDD to make it faster.

\item[MIP solver:] When we formalize the solutions of a rep-tile
by constraint integer programming, we can solve it by mixed integer programming (MIP) solvers.
The conditions of a rep-tile can be formalized in a relatively simple integer programming (IP),
and we can decide if the rep-tile has a solution if and only if the corresponding instance 
in the form of the IP is feasible. Since each feasible solution corresponds to a solution, 
the number of feasible solutions also gives the number of solutions of the rep-tile.
In this formulation, the feasibility is the issue and hence the optimization term in the MIP solver is redundant.

\item[SAT-based solver:] 
Most instances of the integer programming can be solved by SAT-based solvers with some modifications of constraints.
It is the case for the conditions of a rep-tile,
and hence the IP formulation can be translated to the constraints of the SAT-based solvers.
\end{description}
In summary, the puzzle solver and programs based on dancing links, even if we use ZDD,
cannot solve rep-tiles of rep-$n$ for large $n$.
However, this fact does not mean the limit of using a computer.
The MIP solver can solve rep-tiles of rep-$n$ for larger $n$ than the puzzle solvers.
Moreover, we found out that the SAT-based solvers can solve much larger sizes than the MIP solver.
These results were contrary to our expectations.

By using a model counting method with a SAT-based solver,
we succeeded to count the number of solutions of rep-tiles of certain sizes, 
which are bigger than the previously known results.
Our results are summarized in Table \ref{tab:rep-tile}.
(As we will describe later,
there exist $n$-omino rep-tiles only when $n=k^2$ for some positive integer $k$.
Therefore, we will consider $k^2$-omino rep-tiles for $k=1,2,\ldots$.)

\begin{table*}
\begin{minipage}[t]{0.8\textwidth}
\begin{tabular}[t]{|c|r|r|r|r|r|r|r|r|r|r|r|}\hline
\rowcolor{lightgray} $k$    & 1 & 2 & 3 & 4 & 5 & 6 & 7 & 8 & 9 & 10 & 11\\\hline
\drawS & 1 & 0 & 0 & 0 & 0 & 0 & 0 & 0 & 0 &  0 & 32858262881295138816 \\
\drawJ & 1 & 0 & 0 & 0 & 0 & 262144 & 0 & 0 & 0 & 0 & 0\\
\drawF & 1 & 0 & 0 & 0 & 0 & 0 & 0 & 1358954496 & 51539607552 & 0 & 0\\\hline
\end{tabular}

\begin{tabular}[t]{|c|r|r|r|r|r|}\hline
\rowcolor{lightgray} $k$    & 12 & 13 & 14 & 15 & 16 \\\hline
\drawS & 7513742553498633531870412820 & 421105971327597731222250323968 & 0 & 0 & 0 \\
\drawJ & 545409716939029673955819520 & 0 & 0 & 0 & 0 \\
\drawF & 693242756013012824879005696 & 3658830332096120778961977344 & 0 & $>0$ & $>0$ \\\hline
\end{tabular}

\begin{tabular}[t]{|c|r|r|r|r|r|r|r|r|r|r|}\hline
\rowcolor{lightgray} $k$    & 17  &   18  & 19   & 20   & 21  & 22  & 23   & 24   & 25   \\\hline
\drawS &                       0  &   0   &  ?   &  ?   & ?   & ?   & $>0$ & $>0$ & $>0$ \\
\drawJ &                       0  & $>0$  & 0    &  0   & 0   & 0   & 0    & $>0$ & 0 \\
\drawF &                     $>0$ &   ?   & $>0$ & $>0$ & $>0$& ?   & $>0$ & $>0$ & $>0$  \\\hline
\end{tabular}

\end{minipage}
\caption{The number of distinct dissections of $k^2$-omino rep-tiles,
where each number indicates the number of solutions, 
where $0$ means no solution, $>0$ means at least one solution, and $?$ means unknown.}
\label{tab:rep-tile}
\end{table*}

By examining in detail the number of solutions and the specific individual solutions,
we obtain two major new results regarding rep-tiles.

Each 0 in Table \ref{tab:rep-tile} indicates 
that there is no rep-tile of rep-$k^2$ for the corresponding 6-omino.
Since the previously known results of rep-tiles only indicate the existence of a solution constructively, 
it remains open whether there is a solution for other sizes.
In this paper, we show for the first time that there is no solution up to a certain size.
It was conjectured that these three rep-tiles of rep-$144$ were the minimum size in \cite{Gardner4},
and then gradually, smaller solutions were shown constructively.
However, it has never been proved that they are the minimum number.
Our results in Table \ref{tab:rep-tile} reveal for the first time that they are all the minimum rep-tiles.
They put an end to the history of exploration of these rep-tiles for more than 50 years.

As for the size in which solutions exist, we succeed in completely characterizing some of the solutions 
by analyzing the number of solutions and patterns of these solutions.
They contain whole new types of solutions that are not included in previously known constructive solutions.
We also succeed in constructing solutions with completely different characteristics from 
the known solutions by combining a constructive method and a search using these new types of solutions as clues.
By developing these new types of solutions,
it may be possible to find completely new solutions even for 
sizes that are previously expected to have no solution.

\section{Preliminaries}
\label{sec:pre}

A \emph{polyomino} is a simple polygon that can be obtained by joining unit squares edge by edge.
All polygons in this paper are polyominoes.
For an integer $s$, a polyomino of area $s$ is called an \emph{$s$-omino}.
A simple polygon $P$ is a \emph{rep-tile} of \emph{rep-$n$} 
if $P$ can be dissected into $n$ congruent polygons similar to $P$.

In this paper, we focus on polyomino rep-tiles.
Then the following theorem is important.
\begin{theorem}
\label{th:power}
When a polyomino $P$ is a rep-tile of rep-$n$, $n$ is a square number.
That is, there exists a natural number $k$ such that $n=k^2$.
\end{theorem}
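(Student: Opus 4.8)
The plan is to combine a one-line area computation with a boundary-tiling argument, reducing the whole claim to the elementary number-theoretic fact that the square root of a positive integer is either an integer or irrational.

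First I would fix the linear scale of the copies. Since $P$ is dissected into $n$ pieces that are mutually congruent and each similar to $P$, there is a common ratio $r$ such that every piece is a rigid image of the scaled polyomino $rP$. Comparing areas, the $n$ pieces have total area $n r^2\,\mathrm{area}(P)$, which must equal $\mathrm{area}(P)$; hence $n r^2 = 1$ and $r = 1/\sqrt{n}$. Two facts will be recorded here for later use: because each piece is similar to $P$ with ratio $r$ and $P$ has integer edge lengths, every edge of every piece has length $r\cdot m$ for some positive integer $m$; and the target reduces to showing that $r$ is \emph{rational}, since $1/r=\sqrt{n}$ rational together with $n\in\mathbb{Z}_{>0}$ forces $\sqrt{n}\in\mathbb{Z}$, i.e. $n=k^2$.

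Next I would extract rationality of $r$ from a single straight edge on the boundary of $P$. Place coordinates so that one boundary edge of $P$ is the horizontal segment $[0,\ell]\times\{0\}$ with $\ell$ a positive integer, and so that $P$ lies locally in the upper half-plane. Every point just above this segment belongs to some piece, and since no piece may cross below $y=0$, each piece meeting the segment must present a horizontal boundary edge lying on $y=0$. This pins the orientation of such a piece to a multiple of $90^\circ$ (possibly with a reflection), so the edges it contributes along $y=0$ have lengths that are integer multiples of $r$. These contributions partition $[0,\ell]$ with pairwise disjoint interiors, whence $\ell$ is an integer multiple of $r$ and $r=\ell/m\in\mathbb{Q}$. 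Combined with the reduction above, this yields $n=k^2$.

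The main obstacle is the geometric rigor of the boundary step rather than the arithmetic. One must argue carefully that a piece touching the segment along a set of positive length genuinely presents a horizontal edge there, ruling out contact only by a slanted edge or at isolated vertices, that the resulting pieces cover the one-dimensional segment without overlaps or gaps, and that pieces whose bottoms are notched still contribute only integer-multiple-of-$r$ lengths to the cover. I expect the cleanest route is to observe that $\partial P$ is axis-parallel, so any piece-edge lying on $\partial P$ must itself be axis-parallel; this simultaneously fixes the orientation of the boundary pieces and makes the flush, consecutive partition of $[0,\ell]$ routine, after which no assumption on the orientation of interior pieces is needed at all.
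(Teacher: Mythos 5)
Your proposal is correct and follows essentially the same route as the paper: derive the scale ratio $1/\sqrt{n}$ from areas, then observe that an integer-length boundary edge of $P$ must be tiled by piece-edges whose lengths are integer multiples of $1/\sqrt{n}$, forcing $\sqrt{n}$ to be rational and hence an integer. You are merely more explicit about the rational-root step and the geometric flushness of boundary pieces, both of which the paper's proof leaves implicit.
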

\begin{proof}
Let $P$ be a $t$-omino. That is, $P$ consists of $t$ unit squares.
By assumption, $P$ can be dissected to $n$ copies of $P'$, where $P'$ is similar to $P$.
Then, since a unit square has an edge of length 1, the corresponding square of $P'$ has an edge of length $1/\sqrt{n}$.
Let $\ell$ be the length of a shortest edge $e$ of the polyomino $P$.
Then, $\ell$ is an integer and $\ell$ should be a multiple of $1/\sqrt{n}$ since this edge $e$ is formed by tiling $P'$.
Therefore, $\sqrt{n}$ should be an integer, and hence $n$ is a square number.
\end{proof}

By Theorem \ref{th:power}, a polyomino $P$ cannot be a rep-tile of rep-$n$ when $n$ is not a square number.
Therefore, we assume that $n=k^2$ for some positive integer $k$ without loss of generality.
In order to compare to the previous results, we focus on the three 6-ominoes shown in \figurename~\ref{fig:rep144} in this paper.
We call each of them \emph{stair-shape}, \emph{J-shape}, and \emph{F-shape}, respectively.

Among these three 6-ominoes, the J-shape and the F-shape are concave, 
and hence the concave part should be filled by the other piece to construct a rep-tile.
Precisely, a polyomino $P$ is \emph{concave} if there exists a unit square not belonging to $P$ but 
it shares three edges with $P$. We call this square \emph{concave square} of $P$.

In this research, we solve the polyomino rep-tile problem for the three 6-ominoes by some problem solvers.
When we use MIP solver or SAT-based solvers, we have to describe the constraints of the rep-tile problem.
Here we give the common way for the representation.

As a simple example, we consider a domino (or 2-omino) $P$ of rep-4.
In this case, since $4=2^2$ is the square number of $k=2$, 
we consider $P$ as an 8-omino of size $4\times 2$ by scaling 2 and fill $P$ by 4 dominoes of size $2\times 1$.
We first assign a unique number to each unit square of $P$. 
We let
\begin{tabular}[h]{|c|c|c|c|}\hline
0& 1& 2& 3\\\hline
4& 5& 6& 7\\\hline
\end{tabular}
for example.
When we tile 4 dominoes on the 8-omino $P$, 
a binary variable $A(i,j)$ using the numbers of unit squares indicates a way of each domino.
To make the representation unique, we assume that $i<j$.
For example, when $A(0,1)=1$, it means that a domino covers the unit squares 0 and 1.
For this $P$, we use 10 binary variables 
($A(0,1)$, $A(1,2)$, $A(2,3)$, $A(4,5)$, $A(5,6)$, $A(6,7)$, $A(0,4)$, $A(1,5)$, $A(2,6)$, $A(3,7)$)
to represent if a domino covers the corresponding unit squares.

Next, we introduce constraints for each unit square.
Precisely, since each unit square $i$ should be covered by just one domino, 
we have the following constraints.
\begin{eqnarray*}
\mbox{Constraint for the square 0}:& A(0,1)+A(0,4)=1\\
\mbox{Constraint for the square 1}:& A(0,1)+A(1,2)+A(1,5)=1\\
\mbox{Constraint for the square 2}:& A(1,2)+A(2,3)+A(2,6)=1\\
\mbox{Constraint for the square 3}:& A(2,3)+A(3,7)=1\\
\mbox{Constraint for the square 4}:& A(4,5)+A(0,4)=1\\
\mbox{Constraint for the square 5}:& A(4,5)+A(5,6)+A(1,5)=1\\
\mbox{Constraint for the square 6}:& A(5,6)+A(6,7)+A(2,6)=1\\
\mbox{Constraint for the square 7}:& A(6,7)+A(3,7)=1
\end{eqnarray*}
It is clear that $P$ is a rep-tile of rep-$4$ if and only if 
there is a solution that satisfies these eight constraints.

In this paper, we wrote programs that generate 
the declarations of the binary variables and 
the corresponding constraints 
for each combination of 6-ominoes stair-shape, J-shape, or F-shape, and 
a square number $n=k^2$.

\section{Comparisons of Solvers}
\label{sec:exp}

As representative problem solvers, we chose BurrTools as a puzzle solver,
SCIP as a MIP solver, and clingo and NaPS as SAT-based solvers.
For each of the three rep-tiles, we list their running time for solving the rep-tile.
The details and resources of the solvers follow them.
Tables \ref{tab:S}, \ref{tab:J}, \ref{tab:F} summarize the running times of solvers for each rep-tile.
In the tables, DLX indicates the algorithm based on dancing links,
DLZ indicates the algorithm based on dancing links with ZDD, which are implemented by ourselves to compare with BurrTools.
We omit the cases $k<6$ since they are too short, and each number represents seconds.
The symbol ? means timeout in this case.
We set the time limit for each solver as 10 minutes (600 seconds) in DLX/DLZ, 
12 hours (43200 seconds) in clingo, and 2 days (172800 seconds) in NaPS.
The entry OF in DLZ means ``overflow of cache''.
After each $k$, we put $\checkmark$, $\times$, and ? which mean
``there exists a solution'', ``there exists no solution'', and
``we do not know if there is a solution or not,'' respectively.

\begin{table*}[ht]
\centering
\begin{tabular}[t]{|c|r|r|r|r|r|r|r|r|r|r|r|r|}\hline
\rowcolor{lightgray} $k$ (Solution?) & 6\NN & 7\NN    & 8\NN    & 9\NN    &  10\NN & 11\YY  & 12\YY & 13\YY & 14\NN  &  15\NN \\\hline
BurrTools 0.6.3 & $<1$ & 2 & 5760 & ? & ? & ? & ? & ? & ? & ? \\\hline
DLX(1st solution) & $<1$   & 15 & ?  & ?  & ? & ? & $<1$ & ? & ? & ? \\
DLX(all solutions) & ?    & ?  & ?  & ?  & ? & ? & ?  & ? & ? & ? \\\hline
DLZ(1st solution) & $<1$  & $<1$ & $<1$ & 391  & OF   & OF  & $<1$ & OF  & OF  & OF \\
DLZ(all solutions) & ?   & ?  & ?  & ?  & ?  & ? & OF   & OF  & ? & ? \\\hline
SCIP 7.0.2    & 1 & 1 & 1 & 1 & 56 & 42 & 7 & 120 & ? & ? \\\hline
clingo 5.4.0  & $<1$ & $<1$ & $<1$ & $<1$ & $<1$ & 1 & 2 & 2 & 8 & ? \\\hline
NaPS 1.02b2   & $<1$ & $<1$ & $<1$ & $<1$ & $<1$ & $<1$ & $<1$ & 1  & 17  & 6388 \\\hline\hline
\rowcolor{lightgray} $k$ (Solution?) & 16\NN & 17\NN & 18\NN & 19(?)  & 20(?)    & 21(?)  & 22(?)  & 23\YY  & 24\YY  & 25\YY\\\hline
clingo 5.4.0  & ? & 2946 & ? & ? & ? & ?  & ? & 8911 & 26973 & ? \\\hline
NaPS 1.02b2   & (421700) & 1163 & 12530 & ? & ?   & ? & ? & 529 & 1415 & 1744  \\\hline
\end{tabular}
\caption{Time (sec.) for deciding if \drawS{} is a rep-tile of rep-$k^2$
(NaPS finishes its computation after the time limit when $k=16$)}
\label{tab:S}
\end{table*}

\begin{table*}[ht]
\centering
\begin{tabular}[t]{|c|r|r|r|r|r|r|r|r|r|r|r|r|}\hline
\rowcolor{lightgray} $k$ (Solution?)  & 6\YY      & 7\NN    & 8\NN    & 9\NN    &  10\NN & 11\NN  & 12\YY & 13\NN  & 14\NN  &  15\NN \\\hline
BurrTools 0.6.3 & 6 & $<1$ & $<1$ & ? & ? & ? & ? & ? & ? & ? \\\hline
DLX(1st solution) & $<1$   & $<1$ & ?  & ?  & ? & ? & $<1$ & ? & ? & ? \\
DLX(all solutions) & $<1$   & ?  & ?  & ?  & ? & ? & ?  & ? & ? & ? \\\hline
DLZ(1st solution) & $<1$  & $<1$ & $<1$ & $<1$ & $<1$ & 241 & $<1$ & 146 & OF  & OF \\
DLZ(all solutions) & $<1$  & ?  & ?  & ?  & ?  & ? & 6  & ? & ? & ? \\\hline
SCIP 7.0.2    & 1 & 1 & 1 & 5 &  9 & 14 & 69 & 4 & 6 & 19800 \\\hline
clingo 5.4.0  & $<1$ & $<1$ & $<1$ & 1 & 2  & 5 & 4 & 2 & 4 & 5 \\\hline
NaPS 1.02b2   & $<1$ & $<1$ & $<1$ & $<1$    & 1    & 2    &  2   & 7  & 52 & 116 \\\hline\hline
\rowcolor{lightgray} $k$ (Solution?) & 16\NN  & 17\NN   & 18\YY    & 19\NN  & 20\NN    & 21\NN  & 22\NN  & 23\NN  & 24\YY & 25\NN \\\hline
clingo 5.4.0  & 6 & 11 & 9 & 5336 & 41489 & ?  & ? & ? & 1454 & ? \\\hline
NaPS 1.02b2   & 208 & 282  & 113   & 1531 & 116400 & ? & ? & ? & 1675 & ? \\\hline
\end{tabular}
\caption{Time (sec.) for deciding if \drawJ{} is a rep-tile of rep-$k^2$}
\label{tab:J}
\end{table*}

\begin{table*}[ht]
\centering
\begin{tabular}[t]{|c|r|r|r|r|r|r|r|r|r|r|r|r|}\hline
\rowcolor{lightgray} $k$ (Solution?) & 6\NN      & 7\NN    & 8\YY    & 9\YY   &  10\NN & 11\NN  & 12\YY   & 13\YY  & 14\NN  &  15\YY \\\hline
BurrTools 0.6.3 & $<1$ & 960 & 172800 & ? & ? & ? & ? & ? & ? & ? \\\hline
DLX(1st solution) & $<1$   & $<1$ & $<1$ & $<1$ & ? & ? & $<1$ & ? & ? & ? \\
DLX(all solutions) & ?    & ?  & ?  & ?  & ? & ? & ?  & ? & ? & ? \\\hline
DLZ(1st solution) & $<1$  & $<1$ & $<1$ & $<1$ & $<1$ & 102 & $<1$ & 20  & ? & ? \\
DLZ(all solutions) & ?   & ?  & $<1$ & $<1$ & ?  & ? & OF   & OF  & ? & ? \\\hline
SCIP 7.0.2    & 1 & 2 & 43 & 13 & 11 & 259200 & ? & ? & ? & ? \\\hline
clingo 5.4.0  & $<1$ & $<1$ & $<1$ &  1 & 3  & 4  & 11 & 37 & 97 & 372 \\\hline
NaPS 1.02b2   & $<1$ & $<1$ & $<1$ & $<1$ & 2  & 3 & 10 & 14 & 671 & 688 \\\hline\hline
\rowcolor{lightgray} $k$ (Solution?) & 16\YY & 17\YY   & 18(?)    & 19\YY  & 20\YY & 21\YY  & 22(?)  & 23\YY  & 24\YY  & 25\YY\\\hline
clingo 5.4.0  & 244 & 134 & ? & 18022 & 6498  & ?  & ? & ? & ? & ? \\\hline
NaPS 1.02b2   & 316 & 505  & ?  & 7455 & 6249 & 8485 & ? & 47550 & 131900  & 146200 \\\hline
\end{tabular}
\caption{Time (sec.) for deciding if \drawF{} is a rep-tile of rep-$k^2$}
\label{tab:F}
\end{table*}

Comparing to the DLX based on just dancing links, BurrTools implements some more tricks.
The DLZ, which uses not only dancing links but also ZDD, performs more efficiently than DLX, however,
it causes memory overflow when the search space becomes larger.
Comparing to the algorithms based on dancing links, 
the MIP solver SCIP can deal with a larger scale.
We note that we do not need the optimization function of the MIP solver in rep-tile.
When we use SAT-based solvers clingo and NaPS,
the range that can handle is much wider than the other problem solvers.

The details of each experiment are described below, however, there are differences in resources depending on problem solvers.
This is because the authors split up to perform experiments that was good at each tool. 
The difference in computation results due to the difference in resources is considered to be tens to hundreds of times, 
however, considering the scale of the problem that increases exponentially and 
the actual computation results in Tables \ref{tab:S}, \ref{tab:J}, and \ref{tab:F},
it can be seen that the differences of these constant factors do not affect our conclusion.
The following are the details for each experimental environment.

\subsection{Puzzle solvers}

BurrTools 0.6.3\footnote{\url{http://burrtools.sourceforge.net/}} is widely recognized as the standard puzzle solver in the puzzle society.
It supports a variety of grids and also supports 2D and 3D for puzzles that ask to pack a given set of pieces into a given frame (without overlapping or gaps).
According to the web page of BurrTools, it is based on the data structure dancing links proposed by Knuth, who wrote a 270-page textbook \cite{Knuth5}.
Dancing links is a data structure for efficiently performing backtracking in a tree search by depth-first search.
In the literature \cite{Knuth5}, many examples are taken from famous puzzles as applications of backtracking in search trees.
In fact, the polyomino packing puzzle, which is essentially the same as the rep-tile, is also taken up in detail as an example.
In our experiments, the machine used has an Intel Core i5-7300U (2.60GHz) CPU and 8GB of RAM.
It is the limit of analysis for $k=8$, namely, the rep-tile of rep-64 in each pattern.

BurrTools does various tunings internally, however, the details are not public.
For comparison, we first implemented using dancing links as they are.
The machine used has a CPU of Ryzen 7 5800X (3.8GHz) and 64GB of RAM.
The C program for the experiment used DLX1\footnote{See \url{https://www-cs-faculty.stanford.edu/~knuth/programs.html} and \linebreak
\url{https://www-cs-faculty.stanford.edu/~knuth/programs/dlx1.w} for the details.} 
developed by Knuth.
When using DLX1, it turns out that $k=12$ is the limit in terms of finding a solution, and $k=6$ is difficult in terms of finding all solutions.
Next, we tried to speed up the search by combining dancing links with ZDD.
The C program for the experiment used DLX6\footnote{\url{https://www-cs-faculty.stanford.edu/~knuth/programs/dlx6.w}} developed by Knuth.
The word ZDD is an abbreviation for Zero-suppressed Binary Decision Diagram, 
and it is a data structure that shares subtree structures that appear in common in the binary decision tree.
In particular, the memory efficiency is further improved compared to the normal BDD 
by not maintaining the path when the result becomes 0 (see \cite{ZDD} for details).
If ZDD is used in a tree search like our problem, since it is not necessary to repeatedly search the already searched subtree, a significant speedup can be expected.
On the other hand, it is necessary to store all the subtrees once searched in the cache, and hence the memory efficiency is worse than the depth-first search tree.
By speeding up using ZDD, it is possible to achieve up to $k=13$ in the sense of finding a solution, 
and $k=9$ for the F-shape and $k=12$ for the J-shape in the sense of finding all the solutions.
However, when the scale was larger than that, the search could not be completed due to lack of memory.

\subsection{MIP solver}

As the MIP solver, SCIP 7.0.2 \footnote{\url{https://www.scipopt.org/}} was used in this research.
The way of modeling is as introduced in Section \ref{sec:pre}.
SCIP requires a term for optimization, however, it is redundant in our model.
Hence, we minimize the sum of binary variables as a dummy.
Whenever it is feasible, the result comes to $n=k^2$, so it acts as a double check for the feasible solution.

The machine used in the experiment has an Intel Core i5-7300U (2.60GHz) CPU and 8GB of RAM.
Although the results a bit vary, it can be seen that the solvable range is wider than when BurrTools is used.

\subsection{SAT-based solvers}
\label{sec:sat}

Some SAT-based solvers support Pseudo Boolean Constraints (PBs) (see \cite{Knuth5SAT} for details).
All the constraints used in the above MIP solver are within the range of PB except for the optimization term.
Here, the optimization term in the MIP solver was redundant information when finding solutions of the rep tile.
Therefore, when the optimization term is deleted from the constraint descriptions used in the above MIP solver, 
it can be solved by the SAT-based solvers that can handle PBs as they are.

In this research, we used two typical SAT-based solvers for deciding satisfiability;
clingo 5.4.0 \footnote{\url{https://potassco.org/clingo/}} and NaPS 1.02b2 \footnote{\url{https://www.trs.cm.is.nagoya-u.ac.jp/projects/NaPS/}}.
The machine used to run clingo has an Intel Core i7 (3.2GHz) CPU and 64GB of memory, 
and the machine used to run NaPS has a Core i3 (3.8GHz) CPU and 64GB of memory.
Each computation time corresponds to the time for finding the first solution in the other solvers.
Even considering the differences among the execution environments, 
it can be concluded that the range that can be solved by the SAT-based solvers is dramatically expanded compared to the puzzle solver and the MIP solver.

\subsection{How to count the number of solutions}

From the experiments, it was found that the best method for determining the existence of the solution is to use the SAT-based solvers.
The SAT-based solvers used in Section \ref{sec:sat} have the function of finding all solutions in addition to determining whether or not it is satisfiable.
However, it is not practical since it will take time due to the large number of solutions.
On the other hand, the projected model counting solver GPMC\footnote{\url{https://www.trs.cm.is.nagoya-u.ac.jp/projects/PMC/}}
 cannot find a solution for given constraints in CNF, however, the number of solutions can be found at high speed.

Therefore, in order to compute the number of solutions, 
we first determine the satisfiability using NaPS, 
next convert the constraints described in PB to the CNF using the conversion function of NaPS if it is satisfiable.
Then the number of models was counted by GPMC.
(To be more precise, when PBs are converted to CNF, variables other than the binary variable of interest are also generated.
Therefore, the GPMC projection model counting function is used to count only the number of satisfiable assignments to the variable of interest. 
We can count the number of satisfiable solutions by this way.)

Table \ref{tab:rep-tile} summarizes the number of solutions obtained by combining NaPS and GPMC in this way.
The entry written as $>0$ in the table is the entry confirmed that the solution exists using NaPS, 
and the entry that specifically describes the number of solutions is the entry that was successfully counted by GPMC.
The ? mark indicates that any solution could not be found after running NaPS for 2 days.
Here, for the $k=16$ in stair-shape, a solution was found when the time limit was exceeded.

\section{Analysis and New Solutions}
\label{sec:cmp}

As shown in Section \ref{sec:exp}, 
through this research, we were able to compute the number of solutions of rep-tile solutions up to a previously unknown size.
Specifically, in each case, the existence of solutions was determined by NaPS, and the number of solutions was counted by GPMC.
However, although the total number of solutions can be found with this method, the details of the solutions are not clear.
In this section, we observe the solutions by NaPS and the number of solutions by GPMC, referring to the known results, and clarify the details of solutions for some $k$.
As a result, we find new solutions that were not included in the known results at all.
We will look at this in detail for each 6-omino.

\subsection{J-shape 6-omino}

The following property is useful for analysis of J-shape 6-omino (hereafter, we assume $k>1$ to simplify):
\begin{lemma}
\label{lem:Jpair}
Let $k$ be an integer such that J-shape 6-omino is a rep-tile of rep-$k^2$.
Then $k^2$ is an even number and any tiling by $k^2$ copies of J-shape can be dissected into
$k^2/2$ 12-ominoes such that they consist of \drawJRec{} and \drawJX{} (or their mirror images).
\end{lemma}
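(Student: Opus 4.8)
The plan is to pair up the $k^2$ copies of the J-shape through their concave squares and to show that this pairing is a fixed-point-free involution whose orbits are exactly the two 12-ominoes \drawJRec{} and \drawJX{}. First I would record the local geometry of a single tile: inspecting \drawJ{} one checks that it has a \emph{unique} concave square, which I will call its \emph{notch}. Thus in any tiling each copy $A$ determines a single notch cell $N(A)$, a unit square lying outside $A$ but sharing three of its edges with $A$.

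Before pairing I would dispose of boundary effects, so that $N(A)$ is always covered by another tile. The region to be tiled is the copy of the J-shape scaled by the factor $k$, so every original unit cell becomes a $k\times k$ block. For $k\ge 2$ this gives a clean structural fact: each cell of the complement lies inside a $k\times k$ block of complement cells, hence has at least one complement-neighbour within its own block and therefore at most two of its four neighbours in the region. In particular the scaled region has \emph{no} concave square. Consequently $N(A)$ can never fall outside the region, since otherwise $N(A)$ would be an empty cell with three region-neighbours, i.e. a concave square of the region. Hence $N(A)$ lies in the region and is covered by exactly one other copy $f(A)\neq A$, which defines a map $f$ on the set of tiles.

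The core of the argument is a local case analysis showing that $f$ is an involution and that each of its pairs is one of the two named 12-ominoes. Around $N(A)$ three edges are already walled off by $A$, so the copy $B=f(A)$ that covers $N(A)$ with one of its cells is highly constrained: I would run through the finitely many placements of a J-shape (its orientation together with the choice of which of its six cells lands on $N(A)$) and retain only those that do not overlap the three cells of $A$ bordering $N(A)$. The statement to be established is that, up to reflection, exactly two placements survive, that in each the union $A\cup B$ is \drawJRec{} or \drawJX{}, and that in both cases $N(B)$ is in turn covered by $A$, so that $f(f(A))=A$. Together with $f(A)\neq A$ this makes $f$ a fixed-point-free involution. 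A quick sanity check in the rectangle case confirms the mutuality: the two copies forming \drawJRec{} each cover the other's notch.

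Finally, a fixed-point-free involution on the $k^2$ tiles partitions them into $k^2/2$ pairs, which forces $k^2$ to be even and simultaneously exhibits the dissection into $k^2/2$ copies of \drawJRec{} and \drawJX{} (or their mirror images). The main obstacle is the local enumeration: I must be certain that precisely the two advertised configurations survive the non-overlap constraint and that the matching is genuinely mutual, so that no copy is left with its notch filled by a tile whose own notch is instead filled by a third copy. A secondary point requiring care is the no-concave-square step for $k\ge 2$, since it is exactly this that rules out a tile pointing its notch out of the region and thereby escaping the pairing.
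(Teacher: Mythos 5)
Your proof follows the same route as the paper's own (very terse) argument: each J-shape copy has a concave square, that square must be covered by another copy, and only the two placements yielding the two 12-ominoes in the statement survive, which pairs the tiles and forces $k^2$ to be even. Your write-up is a correct elaboration of exactly this idea, and it additionally makes explicit two points the paper leaves implicit --- that for $k\ge 2$ the scaled region has no concave unit square, so a tile's notch cannot point outside the region, and that the notch-covering relation is a mutual, fixed-point-free pairing rather than a chain.
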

\begin{proof}
A J-shape piece $P$ is concave. That is, it has a unit square not belonging to $P$ but sharing three edges of $P$.
To cover this unit square by the other J-shape, we have only two ways shown above. This implies the lemma.
\end{proof}
We obtain a corollary by Lemma \ref{lem:Jpair}.
\begin{corollary}
\label{cor:Jpair}
For any odd number $n>1$, a J-shape 6-omino is not a rep-tile of rep-$n$.
Therefore, for any odd number $k>1$, a J-shape 6-omino is not a rep-tile of rep-$k^2$.
\end{corollary}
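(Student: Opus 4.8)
The plan is to obtain the corollary as a pure parity consequence of Lemma \ref{lem:Jpair} together with Theorem \ref{th:power}; essentially all of the geometric content has already been carried out in proving the lemma, so what remains is bookkeeping on the number of copies.

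First I would handle the statement about rep-$n$. Suppose, for contradiction, that the J-shape 6-omino is a rep-tile of rep-$n$ for some odd $n > 1$. By Theorem \ref{th:power} the number of copies in any polyomino rep-tile must be a perfect square, so I may write $n = k^2$ with $k$ a positive integer; since $n > 1$ we have $k > 1$. Lemma \ref{lem:Jpair} then applies and guarantees that $k^2$ is even, because any such tiling decomposes into $k^2/2$ copies of the prescribed 12-ominoes \drawJRec{} and \drawJX{} (or their mirror images), which forces $k^2$ to be divisible by $2$. This contradicts $n = k^2$ being odd, so no such rep-$n$ can exist.

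The remaining sentence is then immediate: if $k > 1$ is odd, then $k^2$ is an odd integer strictly greater than $1$, so by the first part the J-shape is not a rep-tile of rep-$k^2$. I expect no genuine obstacle here, since Lemma \ref{lem:Jpair} already encodes the evenness that drives the whole argument; the only point that needs care, rather than any real difficulty, is the hypothesis $n > 1$ (equivalently $k > 1$). This restriction is necessary because every polygon is trivially a rep-tile of rep-$1$, so the excluded case $n = 1$ is a true exception and must be kept out of the statement.
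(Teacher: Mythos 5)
Your argument is correct and follows the same route as the paper, which simply derives the corollary from the parity statement in Lemma \ref{lem:Jpair} (the paper gives no separate proof beyond ``We obtain a corollary by Lemma \ref{lem:Jpair}''). Your explicit use of Theorem \ref{th:power} to dispose of odd non-square $n$ is a reasonable piece of bookkeeping that the paper leaves implicit, and your remark about the trivial rep-$1$ case correctly explains the hypothesis $n>1$.
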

By Theorem \ref{th:power} and Corollary \ref{cor:Jpair},
it is sufficient to check whether a J-shape 6-omino is a rep-tile of rep-$k^2$ only for even $k$.
Moreover, by Lemma \ref{lem:Jpair}, we can decide if a J-shape 6-omino is a rep-tile 
by checking of tiling using only two 12-omino pieces \drawJRec{} and \drawJX{}.
Using this method, we can complete the computation for $k$ larger than the experiments in Section \ref{sec:exp}.
By combining the arguments with the results in Section \ref{sec:exp},
we obtain the following theorem for the J-shape 6-omino:

\begin{figure}[ht]
\centering
\begin{minipage}[b]{0.38\textwidth}
\centering
\begin{tabular}[c]{c}\includegraphics[width=30mm,bb=0 0 274 284]{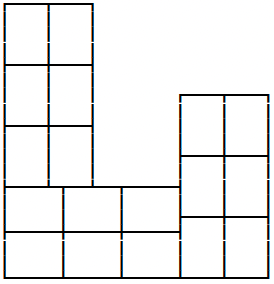}\end{tabular}
\caption{Construction of tiling based on rectangles of size $3\times 4$ for $k=6$} 
\label{fig:k6}
\end{minipage}
%
\begin{minipage}[b]{0.48\textwidth}
\centering
\begin{tabular}[c]{c}\includegraphics[width=60mm,bb=0 0 540 559]{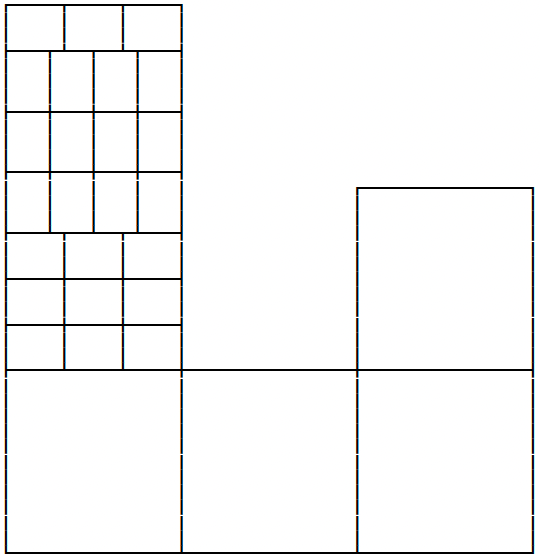}\end{tabular}
\caption{Part of construction of tiling based on rectangles of size $3\times 4$ for $k=12$} 
\label{fig:k12}
\end{minipage}
\end{figure}

\begin{theorem}
\label{th:J}
For a rep-tile of the J-shape 6-omino of rep-$k^2$, we have the following:

\noindent
(0) There exists no rep-tile of rep-$k^2$ for an odd number $k$ (except $k=1$).
There exists no rep-tile of rep-$k^2$ for $k=2,4,8,10,14,16,20,22$.

\noindent
(1) Case $k=6$:
All solutions can be obtained by the following way:
We first dissect the $216$-omino $P$ into $18$ rectangles of size $3\times 4$ as shown in \figurename~\ref{fig:k6} 
and then replace each rectangle by \drawJRec{} or its mirror image.

\noindent
(2) Case $k=12$:
All solutions can be obtained by the following way:
We first dissect the $864$-omino $P$ into $72$ rectangles of size $3\times 4$
and then replace each rectangle by \drawJRec{} or its mirror image.

\noindent
(3) Case $k=18,24$: There are some solutions that contain both \drawJRec{} and \drawJX{}.
\end{theorem}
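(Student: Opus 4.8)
The plan is to organize the four parts around a single dichotomy — whether or not a tiling uses the piece \drawJX{} — and to invoke Lemma \ref{lem:Jpair} throughout, so that every rep-tiling is viewed as a tiling of the scaled J-shape (a $6k^2$-omino) by the two $12$-ominoes \drawJRec{} and \drawJX{}. Part (0) then splits cleanly. The odd case is exactly Corollary \ref{cor:Jpair}, so only the even values $k=2,4,8,10,14,16,20,22$ remain. For each of these I would run an exhaustive search on the \emph{reduced} problem, tiling the scaled J-shape with only the two $12$-omino pieces instead of six individual J-shapes. Because this reduction collapses the two forced ways of covering each concave square, the search space shrinks enough that the solver terminates (where the direct formulation of Section \ref{sec:exp} times out), certifying nonexistence.

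For parts (1) and (2) the real content is the completeness claim that \emph{every} solution is assembled from $3\times 4$ rectangles alone. The key structural fact is a bijection: since the $12$-omino decomposition supplied by Lemma \ref{lem:Jpair} is determined by how each concave square is filled, a tiling that uses only rectangular pairs is recovered uniquely from the data (a dissection of the scaled J-shape into $3\times 4$ rectangles, together with a choice of \drawJRec{} or its mirror in each rectangle). As every area-$12$ rectangle dissection of a $6k^2$-omino uses exactly $k^2/2$ rectangles, the number of rectangle-type solutions equals $\#D\cdot 2^{k^2/2}$, where $\#D$ is the number of such dissections. I would then match this against the GPMC total in Table \ref{tab:rep-tile}. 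For $k=6$ (so $k^2/2=18$) the dissection in \figurename~\ref{fig:k6} gives $\#D\ge 1$, hence at least $2^{18}$ rectangle-type solutions; since the total is exactly $2^{18}=262144$, it follows immediately that $\#D=1$ (the dissection is unique) and that no solution uses \drawJX{}. For $k=12$ (so $k^2/2=72$) I would compute the rectangle-type count independently, either by a transfer-matrix/DP over the $864$-omino region or by re-running the counter with the piece set restricted to \drawJRec{}, obtaining $115495\cdot 2^{72}$, which coincides with the GPMC total; equality again forces every solution to be rectangle-type.

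For part (3) the assertion is merely existential, so for $k=18$ and $k=24$ I would exhibit a single explicit tiling containing at least one \drawJX{} alongside at least one \drawJRec{}, obtained either by extracting such a tiling from the solver output or by embedding a small \drawJX{}-patch into an otherwise rectangular construction, and then verify directly that it tiles the scaled J-shape. Equivalently, one can show that the rectangle-only count is strictly smaller than the full GPMC count, so non-rectangular solutions must exist.

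The main obstacle is the upper-bound (completeness) direction in parts (1) and (2): excluding \drawJX{} entirely. There is no purely geometric obstruction, since \drawJX{} does occur for $k=18,24$, so the exclusion genuinely rests on an exact counting coincidence rather than on a structural impossibility. Making this rigorous requires (i) treating the GPMC total as exact, (ii) establishing the dissection-plus-fillings bijection carefully, in particular that the Lemma \ref{lem:Jpair} decomposition is canonical so that distinct (dissection, filling) data yield distinct tilings and every rectangle-type tiling decodes uniquely, and (iii) for $k=12$, an independent and correct enumeration of the $115495$ rectangle dissections over the irregular $864$-omino boundary. Step (iii) is where I expect the most care to be needed.
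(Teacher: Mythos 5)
Your proposal matches the paper's own argument in all essentials: part (0) by Corollary \ref{cor:Jpair} plus computational verification for the listed even $k$ (the paper uses the direct SAT formulation for $k\le 20$ and the reduced two-piece search only for $k=22$, but this is immaterial), parts (1) and (2) by the exact counting coincidence $\#D\cdot 2^{k^2/2}$ versus the GPMC total (with $\#D=1$ for $k=6$ and $\#D=115495$ for $k=12$, the latter itself computed by a restricted model count), and part (3) by exhibiting explicit mixed tilings found via the two-$12$-omino search with constrained piece counts. The caveats you flag about the completeness direction resting on a counting coincidence rather than a geometric obstruction are accurate and apply equally to the paper's proof.
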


\begin{proof}
\noindent
(0) We can obtain the results for odd $k$ by Corollary \ref{cor:Jpair}.
The search results by SAT-based solvers in Table \ref{tab:J} give us the results for $k\le 20$.
By Lemma \ref{lem:Jpair}, we perform the search of tiling by copies of two 12-ominoes for larger $k$.
Using NaPS, we confirmed that there is no rep-tile of rep-$k^2$ for $k=22$.

\noindent
(1) Case $k=6$:
The known solutions for the J-shape 6-omino on the web page \cite{RepPage} are based on the arrangement of the rectangle \drawJRec{}.
In fact, when $k=6$, the pattern in which 18 rectangles are arranged (\figurename~\ref{fig:k6}) is shown on the web page.
There are two ways to dissect each rectangle to a pair of two copies of the J-shape 6-omino; \drawJRec{} or its mirror image. 
When $k=6$, the number $262144$ of solutions matches $2^{18}=262144$.
That is, in the case of $k=6$, there are at least $2^{18}$ solutions based on the dissection into the rectangles in \figurename~\ref{fig:k6},
which is equal to the number of solutions actually counted by GPMC.
Since they match, we can guarantee that no other solution exists.

\noindent
(2) When $k=12$, the number of solutions is 545409716939029673955819520.
This number is much larger than $2^{72}$, which is obtained by the same dissection of the case $k=6$.
The reason can be expressed as follows.
We first consider a square corresponding to the unit square of the J-shape polyomino $P$.
In the rep-tile for $k=12$, the square is of size $12\times 12$.
Then we can tile this square by tiling 12 rectangles of size $3\times 4$ in vertical or horizontal.
(We note that we have no such a choice in \figurename~\ref{fig:k6}, and the dissection is uniquely determined.)
Therefore, we have to consider the number of ways of tiling of rectangles in vertical or horizontal.
Moreover, when we consider a large rectangle obtained by joining these squares of size $12\times 12$,
there are variants of tiling of rectangles of size $3\times 4$.
A concrete example is shown in \figurename~\ref{fig:k12}.
In this example, the rectangle of size $12\times 24$ in $P$ is dissected into rectangles of size $12\times 3$, $12\times 12$, and $12\times 9$.
It is not easy to count the number of distinct dissections of $P$ into rectangles of size $3\times 4$.
Therefore, we first count the number of dissections of $P$ for $k=12$ into rectangles of size $3\times 4$ (and $4\times 3$) by GPMC,  which finishes soon.
As a result, the number of ways of dissections is 115495.
Here, we can confirm that $545409716939029673955819520=115495\times2^{72}$.
Therefore, every rep-tile for $k=12$ can be obtained by two steps;
first, dissect $P$ into rectangles of size $3\times 4$ and $4\times 3$,
and then replace each of them by \drawJRec{} or its mirror image.

\begin{figure}[ht]
\centering
\begin{minipage}[b]{0.48\textwidth}
\centering
\begin{tabular}[c]{c}\includegraphics[width=65mm,bb=0 0 645 670]{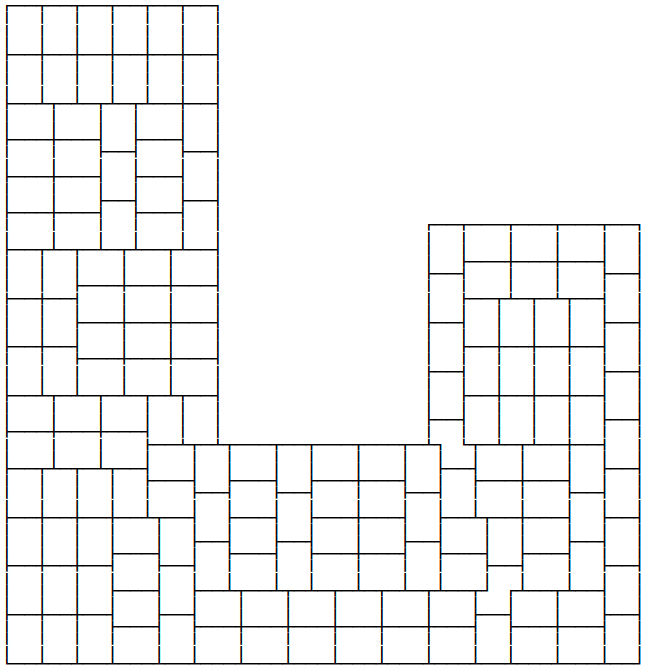}\end{tabular}
\caption{A rep-tile of rep-$18^2$ of J-shape 6-omino that contains \drawJX}
\label{fig:J18x}
\end{minipage}
\begin{minipage}[b]{0.48\textwidth}
\centering
\begin{tabular}[c]{c}\includegraphics[width=70mm,bb=0 0 645 670]{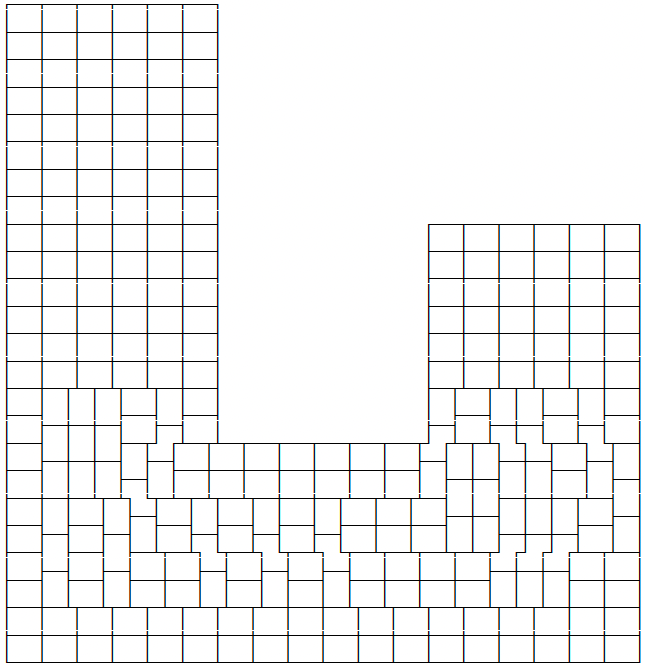}\end{tabular}
\caption{A rep-tile of rep-$24^2$ of J-shape 6-omino that contains \drawJX}
\label{fig:J24x}
\end{minipage}
\end{figure}

\noindent
(3) By Lemma \ref{lem:Jpair}, we can decide if there is a solution that uses \drawJX{} in a tiling by J-shape 6-omino 
by searching using two types of 12-ominoes.
Moreover, when we specify the range of the number of copies of each of two 12-ominoes,
we can decide if there is a solution that contains both \drawJRec{} and \drawJX{}.
As a result, we found that there were such solutions for $k=18,24$; see \figurename~\ref{fig:J18x} and \figurename~\ref{fig:J24x}.
\end{proof}

We note that the solutions that contain \drawJX{} are new solutions not included in previously known results.
So far, in the case $k=18$, there are solutions that contain $x$ copies of \drawJX{} for every even number $x$ from 2 to 46.
There is no such solution when $x\ge 47$. That is, all solutions containing \drawJX{} we found have even number of pairs of this form.
It is not known the details for $k=18$: For example, the number of solutions in the case $k=18$,
whether there exists a solution that contains an odd number copies of \drawJX{}, 
and how many solutions that contain \drawJX{} are not known.
We conjecture that there are solutions that contain \drawJX{} for $k>24$.

\subsection{F-shape 6-omino}

\newcommand{\drawFi}{\begin{tabular}[c]{c}\includegraphics[width=8mm,bb=0 0 68 54]{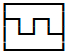}\end{tabular}}

\newcommand{\drawFii}{\begin{tabular}[c]{c}\includegraphics[width=10mm,bb=0 0 81 54]{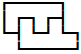}\end{tabular}}

\newcommand{\drawFiii}{\begin{tabular}[c]{c}\includegraphics[width=12mm,bb=0 0 92 54]{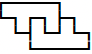}\end{tabular}}

The known rep-tiles of the F-shape 6-omino are a bit complicated, however, the solutions posted on the web page \cite{RepPage} 
are explained as follows: 
We first combine two copies of the F-shape 6-omino to form \drawFi{}, \drawFii{}, and \drawFiii{},
then next arrange them appropriately, and finally place one copy of the F-shape 6-omino if necessary.
In this placement, \drawFi{} is a rectangle, hence replacing it with its mirror image gives us many distinct solutions.

We summarize our results in the following theorem.
Among them, we found new types of solutions that cannot be explained in the way of previously known results for $k=8,15,16,17$.
\begin{theorem}
\label{th:F}
For a rep-tile of the F-shape 6-omino of rep-$k^2$, we have the following:

\noindent
(0) There exists no rep-tile of rep-$k^2$ for $k=2,3,4,5,6,7,10,11,14$.

\noindent
(1) Case $k=8$: 
All solutions can be obtained by the following way:
We first dissect the $384$-omino $P$ in one of the ways shown in \figurename~\ref{fig:Fk8},
and then replace each rectangle by \drawFi{} or its mirror image.

\noindent
(2) Case $k=9$: 
All solutions can be obtained by the following way:
We first dissect the $486$-omino $P$ in one of the ways shown in 
\figurename~\ref{fig:Fk9_1} and \figurename~\ref{fig:Fk9_2}
and then replace each rectangle by \drawFi{} or its mirror image.

\noindent
(3) Case $k=12,13,19,20,21,23,24,25$: 
There exist rep-tiles of rep-$k^2$.
The number of solutions in the case $k=12,13$ can be found in Table \ref{tab:rep-tile}.

\noindent
(4) Case $k=15,16,17$:
There exist rep-tiles of rep-$k^2$ that include the pattern given in \figurename~\ref{fig:F4}.
\end{theorem}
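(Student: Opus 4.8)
The plan is to treat this theorem as a bookkeeping result that combines the experimental data from Section~\ref{sec:exp} with a few structural observations specific to the F-shape, handling the four claims in sequence. For part (0), the nonexistence results for $k=2,3,4,5,6,7,10,11,14$ are already recorded in Table~\ref{tab:F}, where each of these columns is marked \NN; so here I would simply cite those SAT-based search results (clingo and NaPS) as certificates of infeasibility, exactly as was done for the J-shape in part~(0) of Theorem~\ref{th:J}. No new argument is needed beyond pointing to the completed searches.

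For parts (1) and (2), the cases $k=8$ and $k=9$, the goal is to show that \emph{every} solution arises by dissecting the scaled copy $P$ into the rectangular pieces shown in the referenced figures and then independently replacing each rectangle \drawFi{} by itself or its mirror image. The approach mirrors the counting argument used for the J-shape in Theorem~\ref{th:J}(1)--(2): first enumerate the distinct rectangle-dissections of $P$ (by a GPMC count on a reduced model whose tiles are the rectangles), then observe that each \drawFi{} admits exactly two refinements into F-shape pairs, so a dissection into $m$ rectangles contributes a factor $2^{m}$. The theorem then follows once I verify that the total solution count from Table~\ref{tab:rep-tile} matches the product (number of rectangle-dissections)$\times 2^{m}$. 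If the two numbers agree, there can be no solution outside this family, since every listed solution is already accounted for.

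Part (3) is the easiest: for $k=12,13,19,20,21,23,24,25$ the existence of at least one solution is again read directly from Table~\ref{tab:F} (the \YY\ entries), and the exact counts for $k=12,13$ come from Table~\ref{tab:rep-tile}; so I would just cite these. Part (4), the cases $k=15,16,17$, is where the genuinely new content lies and where I expect the main obstacle. Here the claim is existence of solutions containing the specific pattern of \figurename~\ref{fig:F4} that \emph{cannot} be explained by the known rectangle-based construction. The hard part will be producing such solutions: unlike (1)--(2), a blind solver run does not isolate solutions with a prescribed local pattern. The intended method is the combined constructive-plus-search technique hinted at in the introduction---fix the \figurename~\ref{fig:F4} pattern in part of $P$ and let the SAT solver complete the remaining region---and then exhibit the resulting tilings as the certificates. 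The delicate point is confirming that the exhibited solutions are truly outside the previously known family, which requires checking that the \figurename~\ref{fig:F4} block is not decomposable into the rectangles \drawFi{}, \drawFii{}, \drawFiii{} plus a leftover F-shape.
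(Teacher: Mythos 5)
Your overall strategy---read (0) and (3) off the solver tables, certify (1) and (2) by matching a constructive enumeration against the GPMC count, and obtain (4) by constrained search---is the same as the paper's. But there is a concrete flaw in how you set up the enumeration for (1) and (2). You propose a ``reduced model whose tiles are the rectangles,'' transplanting the J-shape argument verbatim. For the F-shape this tile set is wrong: the paper's enumeration is over tilings of $P$ by \emph{all non-concave polyominoes obtainable by combining two or three copies of the F-shape}, which besides the rectangle \drawFi{} includes the non-rectangular two-copy pieces \drawFii{} and \drawFiii{} and the three-copy 18-ominoes \drawFt{} and \drawFs{}; the dissections in \figurename~\ref{fig:Fk8}, \figurename~\ref{fig:Fk9_1}, and \figurename~\ref{fig:Fk9_2} all use such non-rectangular pieces. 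A rectangles-only model cannot reproduce these dissections, so your count would not match the GPMC total and the completeness certification would fail. Relatedly, your verification formula ``(number of rectangle-dissections)$\times 2^{m}$'' presumes every dissection contains the same number $m$ of rectangles; in the paper the six dissections for $k=8$ contain $24, 26, 26, 27, 29, 29$ rectangles respectively, and the check is the sum $2^{24}+2\cdot 2^{26}+2^{27}+2\cdot 2^{29}=1358954496$ (similarly a three-term sum for $k=9$). Only the \drawFi{} rectangles contribute a factor of $2$; the other combined pieces are rigid.

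Parts (0) and (3) are handled exactly as in the paper. For (4), your plan (fix the \figurename~\ref{fig:F4} pattern and let the solver complete the tiling with the two- and three-copy pieces) is essentially what the paper does, and your worry about certifying novelty is already discharged by the defining property of that 24-omino: it requires four F-copies and becomes concave upon removal of any one or two of them, so it cannot be assembled from the previously known building blocks plus a single leftover F-shape.
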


\begin{proof}
\noindent 
(0), (3) We can determine the (non)existence of rep-tiles up to $k=25$ by SAT-based solvers.
By using GPMC, we can count the number of solutions (in the existence case) for each $k$ up to $13$.

\noindent 
(1), (2)
By using NaPS and GPMC, we obtain that the numbers of solutions for $k=8$ and $k=9$ are
1358954496 and 51539607552, respectively.
We then enumerate all non-concave polyominoes that can be obtained by combining 
two or three copies of the F-shape 6-omino, and find all tilings using them.
After that, we count the number of ways of tilings that can be obtained by filling 
each rectangle of size $3\times 4$ by \drawFi{} or its mirror image.
The numbers of tilings should be at most 1358954496 and 51539607552 for $k=8$ and $k=9$, respectively.
In fact, we found that we have already listed all tilings since they are equal in both cases.
The patterns of solutions are listed in \figurename~\ref{fig:Fk8}, \figurename~\ref{fig:Fk9_1}, and \figurename~\ref{fig:Fk9_2}.
We use the all non-concave 18-polyominoes obtained by combining three copies of the F-shape 6-omino, however, 
in fact, only \drawFt{} and \drawFs{} are required to enumerate all solutions for $k=8$ and $k=9$.

\begin{figure*}[ht]
\centering
\includegraphics[width=0.8\textwidth,bb=0 0 1229 431]{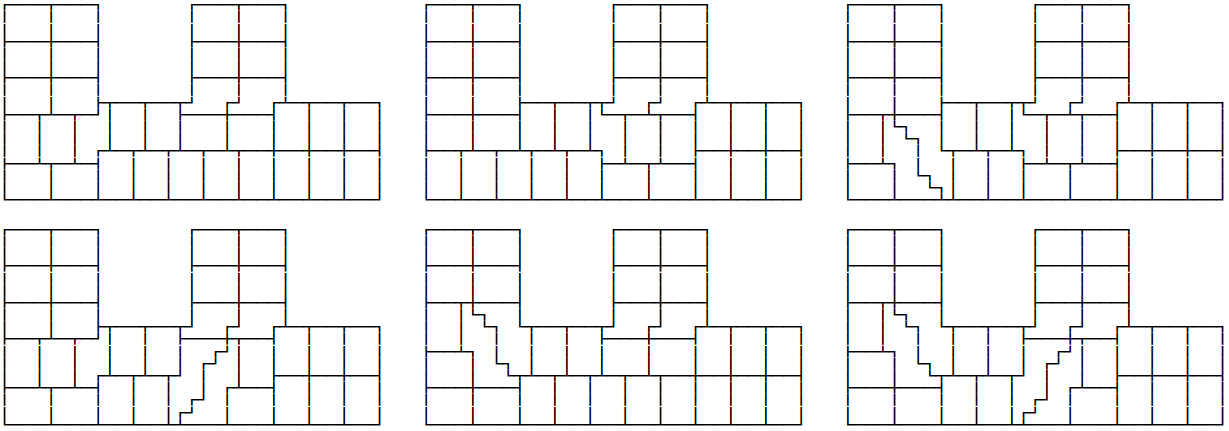}


\caption{All solutions of rep-tiles of rep-$8^2$ for the F-shape 6-omino 
(we can obtain many variants when we fill each rectangle \drawFi{} or its mirror image)}
\label{fig:Fk8}
\end{figure*}

\begin{figure*}[ht]
\centering
\includegraphics[width=0.8\textwidth,bb=0 0 1369 729]{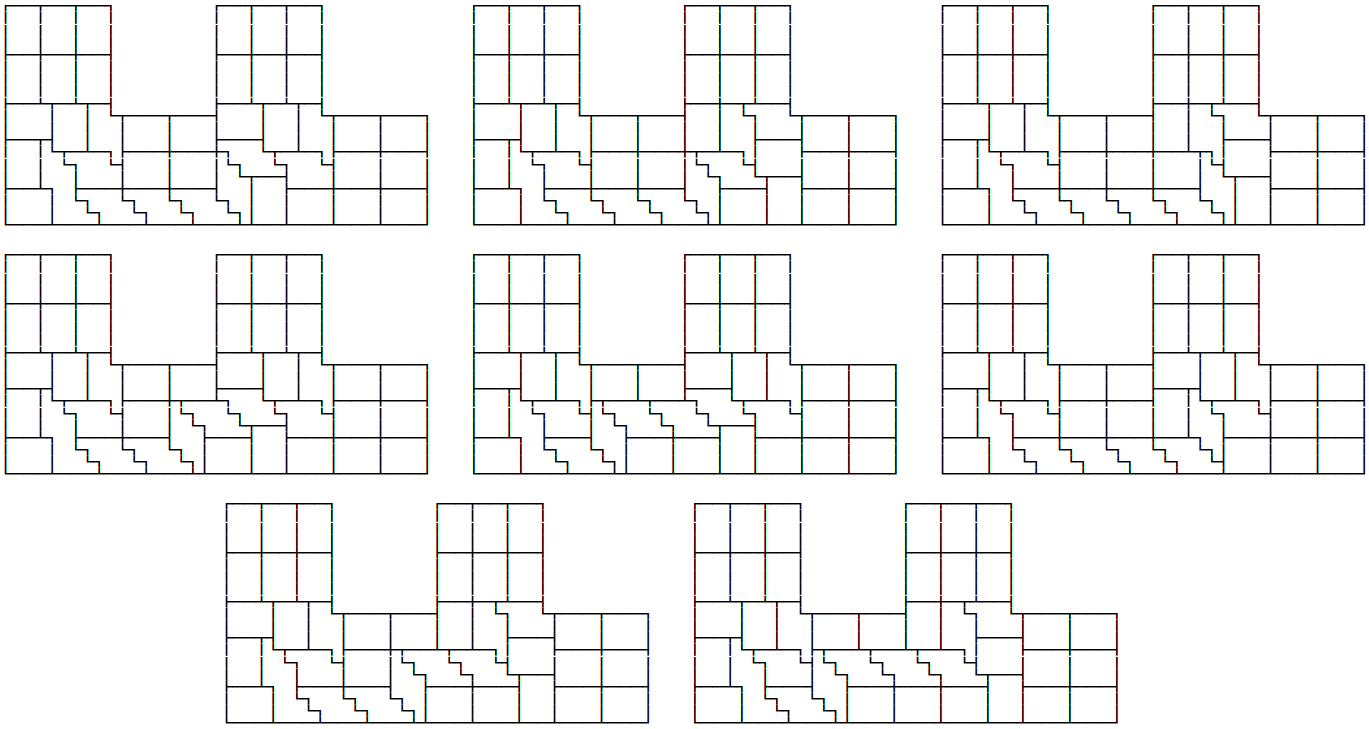}
%
%
%
%
\caption{All solutions of rep-tiles of rep-$9^2$ for the F-shape 6-omino (1/2)}
\label{fig:Fk9_1}
\end{figure*}

\begin{figure*}[ht]
\centering
\includegraphics[width=0.8\textwidth,bb=0 0 1369 479]{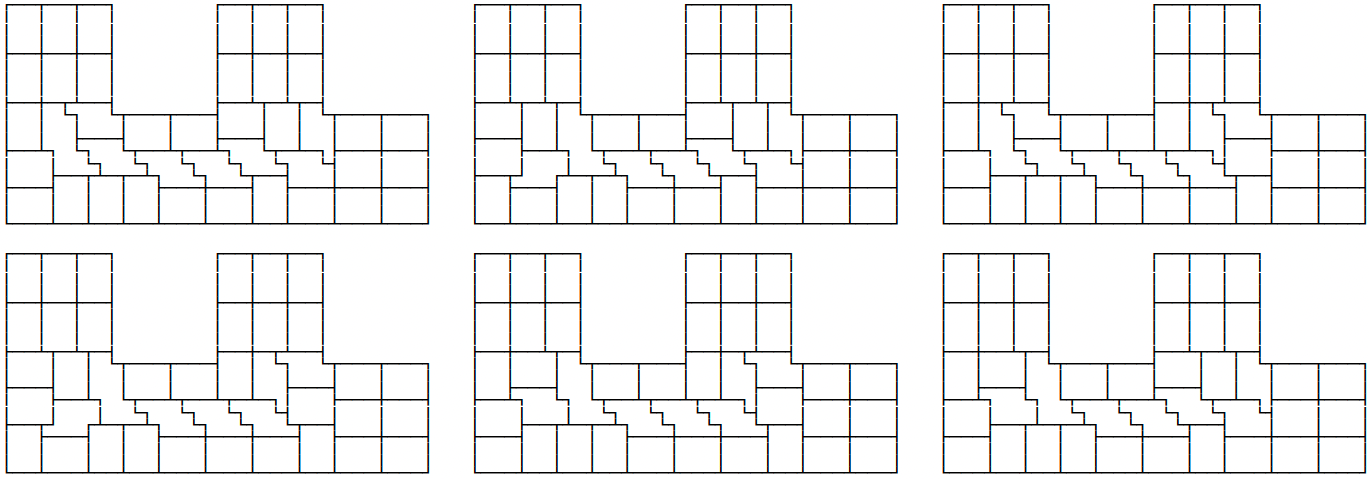}
%
%
%
%
%
%
\caption{All solutions of rep-tiles of rep-$9^2$ for the F-shape 6-omino (2/2)}
\label{fig:Fk9_2}
\end{figure*}

Precisely, when $k=8$, there exist six essentially different dissections.
When we consider replacing each rectangle by \drawFi{} or its mirror image,
we obtain the number of solutions given by \figurename~\ref{fig:Fk8}
is equal to $2^{24} + 2\times 2^{26} + 2^{27} + 2\times 2^{29} = 1358954496$ that coincident with 
the number of solutions obtained by running NaPS and GPMC.
When $k=9$, we have fourteen essentially different dissections.
By considering the numbers of rectangles in these dissections,
the total number of solutions given by \figurename~\ref{fig:Fk9_1} and \figurename~\ref{fig:Fk9_2}
is $8\times 2^{30}+2\times 2^{32}+4\times 2^{33}=51539607552$ that contains all solutions obtained by NaPS and GPMC.

Checking all of these solutions, 
we can confirm that we can construct any rep-tile for $k=9$ 
by combining \drawFi{}, \drawFii{}, and \drawFiii{} 
and add one copy of the F-shape 6-omino if necessary.
Moreover, the last one copy is added to form \drawFt{} or \drawFs{}.
Concretely, \drawFt{} is used in the eight patterns in \figurename~\ref{fig:Fk9_1},
and \drawFs{} is used in the six patterns in \figurename~\ref{fig:Fk9_2}.
That is, when $k=9$, we can construct any solution by tiling some copies of 
\drawFi{}, \drawFii{}, and \drawFiii{} with one copy of \drawFt{} or \drawFs{}.
In other words, these solutions can be represented in the same way of the previously known results.

However, when $k=8$, we cannot construct all solutions in the way of the previously known results.
More precisely, the first two patterns among six patterns in \figurename~\ref{fig:Fk8} can be represented in this way,
however, the next three patterns require to add two copies of the F-shape 6-omino.
Moreover, the last pattern requires to add four copies of the F-shape 6-omino.
That is, among six patterns in \figurename~\ref{fig:Fk8},
there are only two patterns that can be represented in the way of the previously known results
and the other four patterns give us new solutions.
Especially, in the last two patterns in \figurename~\ref{fig:Fk8},
we have to place both copies of \drawFt{} and \drawFs{} after placements of 
copies of \drawFi{}, \drawFii{}, and \drawFiii{}.

\begin{figure}[thb]
\centering
\begin{minipage}[t]{0.2\textwidth}
\centering
\includegraphics[width=0.7\textwidth]{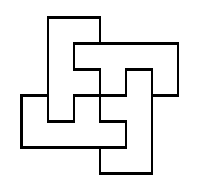}
\caption{A non-concave 24-omino that requires four copies of the F-shape 6-omino (any removal of one or two copies makes concave)}
\label{fig:F4}
\end{minipage}
\hfill
\begin{minipage}[t]{0.7\textwidth}
\centering
\includegraphics[bb=0 0 534 817,width=0.7\textwidth]{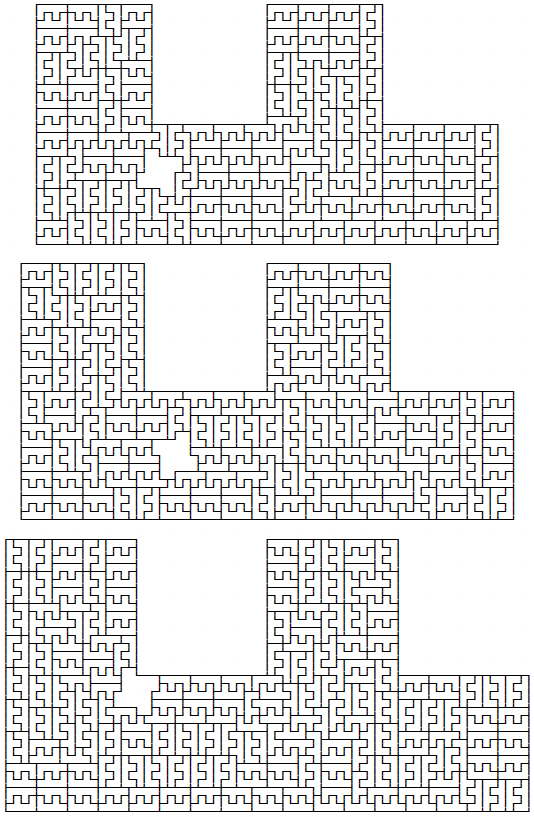}
\caption{Examples of rep-tiles that contain the pattern in \figurename~\ref{fig:F4} for $k=15,16,17$}
\label{fig:windmill}
\end{minipage}
\end{figure}
\end{proof}

\noindent 
(4) In the case of $k=8$ or $k=9$, we can construct all rep-tiles by tiling non-concave polyominoes obtained by 
combining two or three copies of the F-shape 6-omino. 
Then, is this common in all the rep-tiles by the F-shape 6-omino? It is not the case.
We first note that there exist patterns that require four or more copies of the F-shape 6-omino.
A concrete example is given in \figurename~\ref{fig:F4}.
(There are no rep-tile containing such a pattern in the previously known results.)
We searched rep-tiles that require copies of the pattern in \figurename~\ref{fig:F4}
with non-concave polyominoes obtained by combining two or three copies of the F-shape 6-omino. 
Then there are some solutions (\figurename~\ref{fig:windmill})
 containing the pattern in \figurename~\ref{fig:F4} for $k=15,16,17$.
They are completely different rep-tiles from the previously known solutions.

\subsection{Stair-shape 6-omino}

\begin{figure*}[htb]
\centering
\begin{tabular}{ccc}
\includegraphics[width=0.3\textwidth]{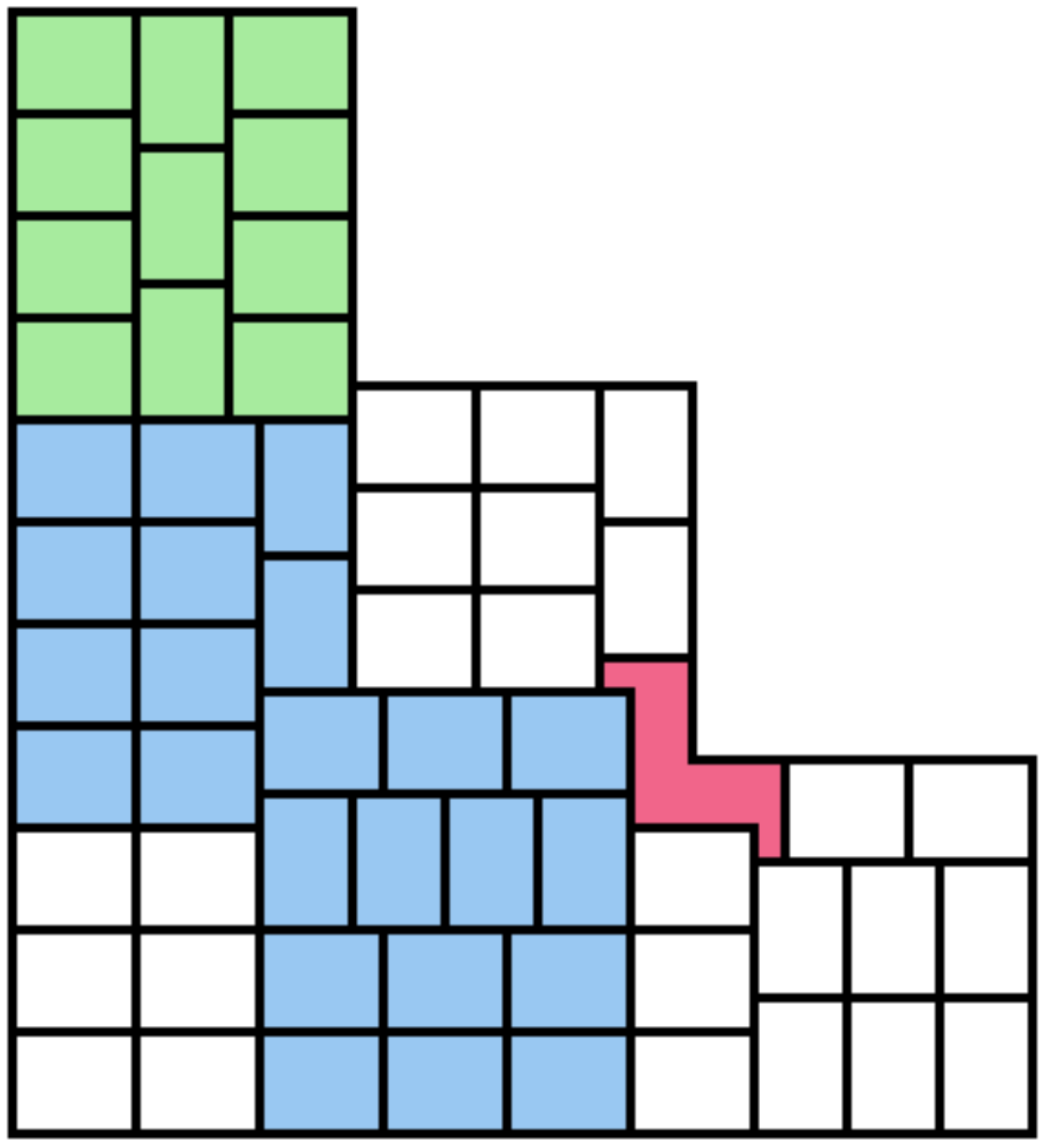} & 
\includegraphics[width=0.3\textwidth]{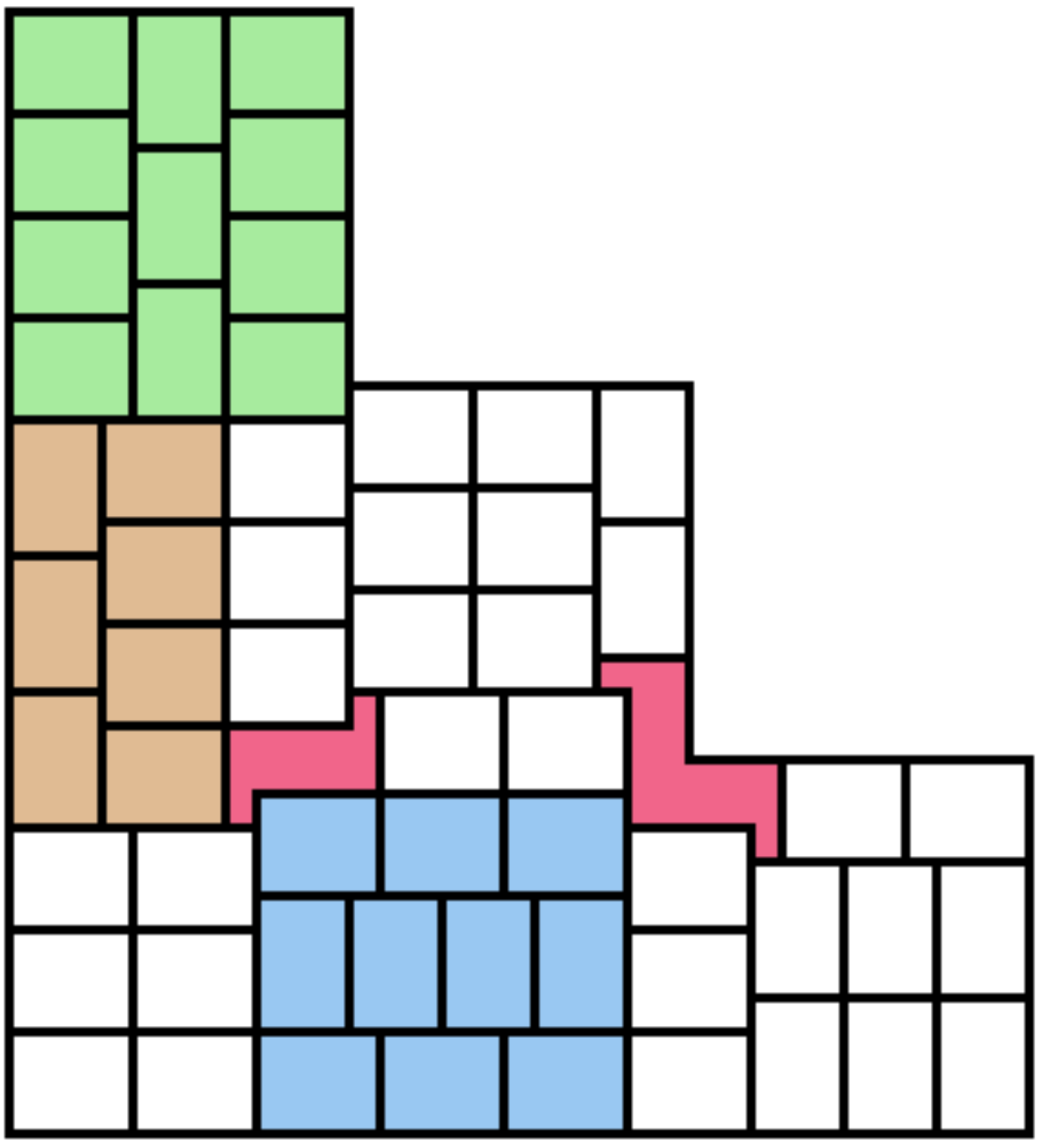} & 
\includegraphics[width=0.3\textwidth]{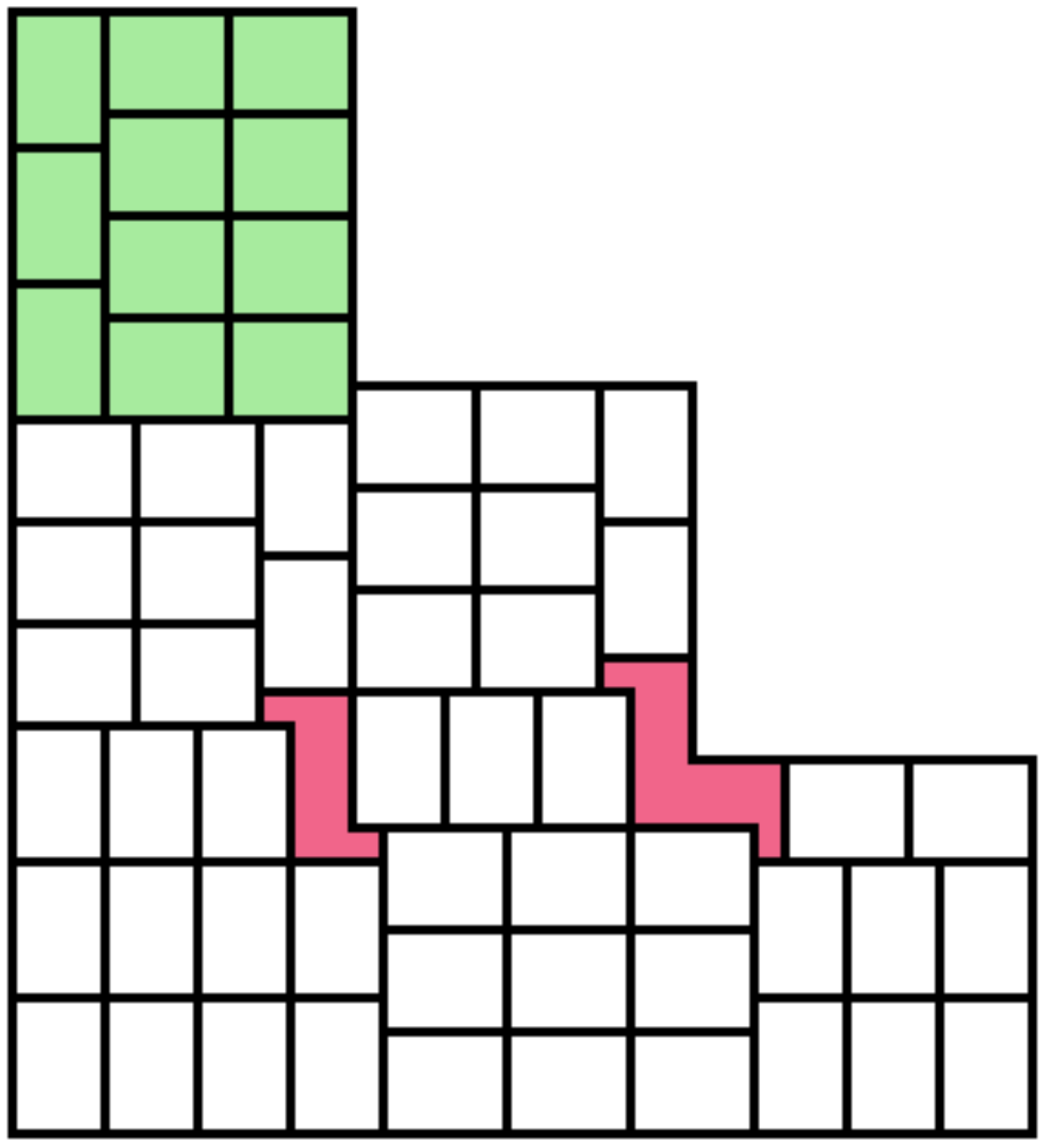} \\
(a) & (b) & (c) 
\end{tabular}
\caption{Three solutions of the stair-shape 6-omino for $k=11$}
\label{fig:revS-size11}
\end{figure*}

Since the stair-shape 6-omino is not concave (in our definition) contrast with the J-shape and F-shape 6-ominoes,
it is difficult to search its rep-tile pattern systematically.
However, by generating the unit patterns obtained by combining a few copies of the stair-shape
to cancel the zig-zag part of it and tiling the copies of these unit patterns,
we succeeded to generate all patterns of solutions for $k=11$.
The results can be summarized as follows:

\begin{theorem}
\label{th:S}
For a rep-tile of the stair-shape 6-omino of rep-$k^2$, we have the following:

\noindent
(0) There exists no rep-tile of rep-$k^2$ for $k=2,3,4,5,6,7,8,9,10,14,15,16,17,18$.

\noindent
(1) Case $k=11$:
All solutions can be obtained by the following way:
We first dissect the $726$-omino $P$ into one of three patterns in \figurename~\ref{fig:revS-size11}.
Then replace each polygon by \drawSz{}, \drawSo{}, or \drawSt{} (or their mirror images).
We note that the previously known results are included in \figurename~\ref{fig:revS-size11}(a),
and the patterns in \figurename~\ref{fig:revS-size11}(b)(c) are new solutions that we found in this research.

\noindent
(2) Case $k=12,13,23,24,25$: 
There exist rep-tiles of rep-$k^2$.
The number of solutions in the case $k=12,13$ can be found in Table \ref{tab:rep-tile}.
\end{theorem}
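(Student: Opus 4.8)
The plan is to split the theorem into its three parts, with part~(1) carrying essentially all of the work and parts~(0) and~(2) following directly from the solver computations already tabulated.

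For part~(0), I would dispose of $k=2,3,4,5$ at once: the scaled polyomino $P$ then has area $6k^2\le 150$, so an exhaustive search (or any of the solvers) confirms nonexistence immediately, while the remaining values $k=6,\dots,10$ and $k=14,\dots,18$ are precisely the $\times$ entries established by the SAT-based solvers in Table~\ref{tab:S}. For part~(2), a single dissection returned by NaPS is a certificate of existence, so the $\checkmark$ entries for $k=12,13,23,24,25$ settle existence, and the counts already recorded in Table~\ref{tab:rep-tile} give the numbers of solutions for $k=12,13$.

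The substance of the theorem is part~(1). The obstacle here is structural: unlike the J-shape and F-shape, the stair-shape is not concave, so there is no local forcing analogous to Lemma~\ref{lem:Jpair} that would let me argue a priori that every tiling decomposes into a fixed family of pieces. My plan is therefore a counting certificate. First, using NaPS followed by GPMC, I obtain the exact total number of distinct dissections, $N=32858262881295138816$. Next I introduce the unit patterns \drawSz{}, \drawSo{}, \drawSt{} --- polyominoes assembled from two or three copies of the stair-shape so that their zig-zag boundaries cancel into straight edges --- and I enumerate every tiling of the scaled $726$-omino $P$ that uses only these units and their mirror images. This constrained search is tractable precisely because the units have simple boundaries, and it returns the three block-dissections of \figurename~\ref{fig:revS-size11}(a)(b)(c).

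To upgrade ``all unit-tilings'' to ``all solutions,'' I would count the number $M$ of distinct stair-shape dissections produced by the construction: for each of the three block-dissections this is a product of independent degrees of freedom --- the mirror-image choice for each unit, together with any ways a block may itself be retiled by units --- summed over the three patterns, and taken with care to avoid double counting, since one dissection might in principle be grouped into units in more than one way (as happens for the J-shape at $k=12$). Because every unit-tiling is a genuine dissection of $P$ into $121$ copies of the stair-shape, the construction produces only solutions, so $M\le N$ automatically; I would then verify by direct arithmetic that $M=N$. This equality forces that no solution lies outside the construction, which is exactly the claim of part~(1), and the remark that \figurename~\ref{fig:revS-size11}(a) recovers the previously known solutions while (b) and (c) are new follows by inspection. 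The hard part will be this completeness step: with no structural reduction available I must rely entirely on the numerical match, so the unit patterns have to be chosen rich enough that all three block-dissections arise yet restricted enough to enumerate exhaustively, and $M$ must be computed with every overcount resolved --- it is only the exact agreement of a twenty-digit number with the GPMC count that certifies nothing was missed.
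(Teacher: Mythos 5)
Your proposal matches the paper's proof in both structure and substance: parts (0) and (2) are delegated to the NaPS/GPMC computations (the paper only adds that $k=16$ required an extended time limit), and part (1) is established exactly as you describe --- a constrained search over tilings by the units \drawSz{}, \drawSo{}, \drawSt{} yielding the three block patterns, followed by a count of the resulting dissections ($2\times 18\times 2^{59}$ for pattern (a) and $2\times(18+3)\times 2^{58}$ for (b)(c), summing to $32858262881295138816$) that is checked against the GPMC total to certify completeness. Your explicit caution about possible double counting when a dissection admits more than one grouping into units is a point the paper leaves implicit, but otherwise the two arguments are the same.
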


\begin{proof}
We omit all the cases except $k=11$ since they were obtained by NaPS and GPMC.
(Here we note that $k=16$ is an exception: the solution in this case could not be obtained by the time limit,
however, we could obtain it when we extend the time limit.)
When $k=11$, we perform the search by using three 12-ominoes obtained by \drawSz, \drawSo{}, and \drawSt.
We have three groups by the search.

The first pattern is given in \figurename~\ref{fig:revS-size11}(a):
It uses 59 copies of \drawSz{} and one copy of \drawSt{}.
There are three ways of tiling the left upper green rectangle by 11 rectangles of size $3\times 4$,
and six ways of tiling the blue polygon by 23 rectangles of size $3\times 4$.
(For the latter blue polygon,
there are three ways of tiling of the left upper blue rectangle of size $11\times 12$,
four ways of tiling of the right lower blue rectangle of size $12\times 13$,
and one in common, which implies six ways in total.)
Since we can make a mirror image with respect to the line of 45 degrees,
the total number of solutions in the pattern in \figurename~\ref{fig:revS-size11}(a)
is $2\times 18\times 2^{59}=20752587082923245568$.

The next pattern is given in \figurename~\ref{fig:revS-size11}(b),
which uses 58 copies of \drawSz{}, one copy of \drawSo{}, and one copy of \drawSt{}.
In this case, there are three ways to tile the left upper green rectangle,
two ways to tile the central brown rectangle, and 
three ways to tile the lower blue rectangle.
The last pattern in \figurename~\ref{fig:revS-size11}(c) also 
uses 58 copies of \drawSz{}, one copy of \drawSo{}, and one copy of \drawSt{}.
It has three ways to tile the green rectangle.
In total, the number of solutions in patterns in \figurename~\ref{fig:revS-size11}(b) and 
\figurename~\ref{fig:revS-size11}(c) is $2\times (18+3)\times 2^{58}=12105675798371893248$.

Therefore, when we add all solutions in the patterns in \figurename~\ref{fig:revS-size11}(a)(b)(c),
it makes $42\times 2^{58}+36\times 2^{59}=32858262881295138816$,
which is equal to the number of solutions in Table \ref{tab:rep-tile}.
Therefore, we cover all rep-tiles for $k=11$.
\end{proof}


\section*{Acknowledgments}
This research is partially supported by Kakenhi (17K00017, 18H04091, 20H05964, 21K11757).


\end{document}